\newtheorem{theorem}{Theorem}
\newtheorem{remark}{Remark}
\newtheorem{lemma}{Lemma}
\newtheorem{corollary}{Corollary}
\title{\LARGE \bf Physical Watermarking for Replay Attack Detection in Continuous-time Systems\thanks{This work is supported by the National Science Foundation, under grant number 1932530.}}
\author{Bahram Yaghooti, Raffaele Romagnoli, Bruno Sinopoli
\thanks{B. Yaghooti and B. Sinopoli are with the Department of Electrical and Systems Engineering, Washington University in St. Louis, St. Louis, MO, USA 63130 (Email: byaghooti@wustl.edu; bsinopoli@wustl.edu).
}
\thanks{R. Romagnoli is with the Department of Electrical and Computer Engineering, Carnegie Mellon University, Pittsburgh, PA, USA 15213 (Email: rromagno@andrew.cmu.edu).
}
}
\begin{document}

\maketitle
\thispagestyle{empty}
\pagestyle{empty}

\maketitle

\begin{abstract}
Physical watermarking is a well established technique for replay attack detection in cyber-physical systems (CPSs). Most of the watermarking methods proposed in the literature are designed for discrete-time systems. In general real physical system evolve in continuous time. In this paper, we analyze the effect of watermarking on sampled-data continuous-time systems controlled via a Zero-Order Hold. We investigate the effect of sampling on detection performance and we provide a procedure to find a suitable sampling period that ensures detectability and acceptable control performance. Simulations on a quadrotor system are used to illustrate the effectiveness of the theoretical results.
\end{abstract}


\section{Introduction}\label{sec:introduction}
Cyber-Physical Systems (CPSs) integrate communication, computation, and control into physical world. CPSs play a crucial role in the design of efficient and sustainable services that are pillars of modern societies, such as energy delivery, transportation, health care, and water distribution \cite{lee2008cyber}. Their safety and security represent one of the main design challenges, as their heterogeneous and distributed nature makes CPSs vulnerable to a multitude of cyber-attacks \cite{markoff2010silent, sanger2012obama}.

The problem of cyber-attacks in networked control systems has been studied comprehensively in previous research \cite{khojasteh2020learning,kafash2018constraining, ghaderi2020blended,renganathan2020distributionally,khojasteh2019authentication}. In this paper our focus is on the analysis and detection of \textit{replay attacks} in control systems \cite{langner2011stuxnet,case2016analysis}. In such an attack, a cybercriminal eavesdrops on a network, and then fraudulently delays or resends old observations to misdirect the receiver into thinking that the system is behaving normally while carrying out their attack. One of the characteristics of replay attack which makes it simple to implement is that it can be used by a hacker without advanced knowledge of the system or skills to decrypt messages. This type of attack can be successful just by repeating a set of recorded data. 


The first model of replay attacks on control systems together with a proposed countermeasure was introduced by Mo \textit{et al.} \cite{mo2009secure} and refined in subsequent papers \cite{mo2011cyber,mo2013detecting}. The basic idea revolves around the use of physical watermarking, a secret noisy control input added to an intended control input and aimed at authenticating the received observation. This framework was developed for discrete-time systems, and its performance has been studied extensively in the literature \cite{romagnoli2019model,hosseinzadeh2019feasibility,griffioen2019tutorial}.



In general, many physical processes are continuous-time and controlled by sampling outputs and using a Zero-Order-Hold (Z.O.H) method for control. 
In this paper, we investigate the application of the watermarking framework to continuous-time systems and analyze the effect of sampling period on its performance. Specifically, while it is known that decreasing sampling period improves the performance of the controller, we will show that sampling also affects the performance of the detector. In this paper we explore this tradeoff to design an optimal sampling period. Finally, to illustrate the theoretical results, we apply the proposed methodology to a quadrotor hovering around an equilibrium point.

The rest of the paper is organized as follows: In section \ref{sec:problem_statement}, the discretization of a linear continuous-time stochastic systems is reviewed. Section \ref{sec:watermarking_disc_sys} provides basic concepts of the watermarking framework. Section \ref{sec:watermarking_cont_sys} investigates the effect of the sampling period on the controller and detector. It is shown that in the case of physical watermarking, the sampling period $T$ becomes a design parameter alongside the covariance of the watermarking signal. We therefore generalize the design of the watermarking signal in \cite{mo2013detecting} by defining a new optimization problem where we jointly design both the covariance of the watermarking and the sampling period, with appropriate constraints on the loss of performance and the maximum allowable sampling period as per \cite{franklin1998digital}.  To evaluate the theoretical results, the watermarking method is applied to a quadrotor, and simulation results are provided in section \ref{sec:simulation}. Finally, a brief conclusion is presented in the last section.
\section{System Description}\label{sec:problem_statement}
Consider the following linear continuous-time stochastic system:
\begin{align}
    \dot{x}(t) & = Ax(t) + Bu(t) + w(t)\label{eq:linear_cont_sys} \\
    y(t) & = Cx(t) + v(t), \label{eq:output_cont_sys}
\end{align}
where $x\in\mathbb{R}^n$ is the system state vector;  $A\in\mathbb{R}^{n\times n}$, $B\in\mathbb{R}^{n\times p}$, and $C\in\mathbb{R}^{m\times n}$ are known and constant matrices; $u(t)\in\mathbb{R}^p$ and $y(t)\in\mathbb{R}^m$ are the input and output of the system, respectively; and $w(t)\in\mathbb{R}^n$ and $v(t)\in\mathbb{R}^m$ are the process and measurement noises, respectively. It is assumed that $w(t)$ and $v(t)$ are zero-mean Gaussian white noises,
\begin{align}
    \mathbb{E}[w(t)] & = 0 \\
    \mathbb{E}[v(t)] & = 0 \\
    \mathbb{E}[w(t)w^T(s)] & = Q\delta(t-s) \\
    \mathbb{E}[v(t)v^T(s)] & = R\delta(t-s), \label{eq:cont_measurement_cov}
\end{align} 
where $\delta(\cdot)$ is Dirac delta function, and $Q\in\mathbb{R}^{n\times n}$ and $R\in\mathbb{R}^{m\times m}$ are known matrices \cite{astrom1971introduction}.

It is assumed that the process and measurement noises are independent of the previous and current state and independent of each other:
\begin{align}
    \mathbb{E}[w(t) x^T(s)] & = 0 \quad \text{for } t\geq s \\
    \mathbb{E}[v(t) x^T(s)] & = 0 \quad \text{for } t\geq s \\
    \mathbb{E}[w(t) v^T(s)] & = 0 \quad \text{for all } t,s \geq 0.
\end{align}
Physical watermarking has been used for replay attack detection in control systems. As this framework is developed for discrete-time systems, we first need to discretize system \eqref{eq:linear_cont_sys}. To this end, the control input will be applied using Zero-Order-Hold (ZOH) method with sampling period $T$, so the controller will produce a piece-wise constant command between sampling periods,
\begin{align}
    u(t) = u_k, \quad kT \leq t < (k+1)T, \quad  k=0,1,\cdots .
\end{align}
By integrating both sides of \eqref{eq:linear_cont_sys}, discrete-time system can be obtained as follows:
\begin{align}
    x((k+1)T) = & e^{AT}x(kT) \nonumber \\
    & + \Big(\int_{kT}^{(k+1)T}e^{A((k+1)T-\tau)}d\tau\Big)B u_k \nonumber \\
    & + \int_{kT}^{(k+1)T}e^{A((k+1)T-\tau)}w(\tau) d\tau .
\end{align}
Let us define $A_d$, $B_d$, and $w_k$ as 
\begin{align}
    A_d & = e^{AT} = \sum_{m=0}^{\infty}\frac{A^mT^m}{m!} \\
    B_d & = \Big(\int_{kT}^{(k+1)T}e^{A((k+1)T-\tau)}d\tau\Big)B = \sum_{m=0}^{\infty}\frac{A^m B T^{m+1}}{(m+1)!} \label{eq:B_discrete_2} \\
    w_k & = \int_{kT}^{(k+1)T}e^{A((k+1)T-\tau)}w(\tau) d\tau .
\end{align}
Then, the corresponding discrete-time system can be obtained as 
\begin{align}\label{eq:linear_disc_sys}
    x_{k+1} = A_d x_k + B_d u_k + w_k,
\end{align}
where $w_k$ is the Gaussian process noise in the discretized system with zero mean and covariance $Q_d$
\begin{align}
    Q_d = \int_{kT}^{(k+1)T} e^{A((k+1)T-\tau)}Q e^{A^T((k+1)T-\tau)}d\tau .
\end{align}
Output of the discrete-time system can be obtained from Equation \eqref{eq:output_cont_sys} in the following form
\begin{align}\label{eq:output_disc_sys}
    y_k = Cx_k + v_k,
\end{align}
where $v_k$ is zero-mean Gaussian measurement noise with the covariance $R_d$, i.e.
\begin{align}
    v_k \sim \mathscr{N}(0,R_d).
\end{align}
$R_d$ should be calculated such that the covariance matrix of the discrete noise tends to the autocorrelation of the continuous noise as the sampling period tends to zero. In order to solve this problem, $1/T$ is used as an approximation for the Dirac delta function. The autocorrelation matrix defined by \eqref{eq:cont_measurement_cov} has infinite-valued elements. Then, by shrinking the sampling period and increasing the pulses amplitude, we have $R_d\rightarrow R/T$. As the sampling period tends to zero, the discrete noise sequence tends to one of the infinite-valued pulses of zero duration in \eqref{eq:cont_measurement_cov}. $R$ defined in \eqref{eq:cont_measurement_cov} is called the spectral density matrix. Then, the area under the impulse autocorrelation function in equivalent signal is $R_dT$ and equal to the area $R$ in the continuous white noise impulse autocorrelation \cite{gelb1974applied}. Then, the relationship between $R$ and $R_d$ can be obtained as
\begin{align}
    R_d \approx  \frac{R}{T} .
\end{align}


\section{Watermarking Framework in Discrete-time systems}\label{sec:watermarking_disc_sys}
In this section, a brief description is presented for the watermarking framework which has been developed for replay attack detection in discrete-time control systems \cite{mo2011cyber}. We assume that a hacker wants to corrupt the system defined in \eqref{eq:linear_disc_sys} and equipped with a Kalman filter and an LQG controller. We consider two main assumptions: 1) The attacker has access to all the sensors data; 2) The attacker can inject any control input to the system. To detect this attack, a watermarking signal is added to the normal control input. By doing so, we can distinguish between a normally operating system, which is driven by the current watermarks, and an attacked system, which is driven by a previous sequence of watermarks.

\subsection{Kalman Filter}
It is well known that the Kalman filter provides the optimal state estimate $\hat{x}_{k|k}$ for system \eqref{eq:linear_disc_sys}, as it provides the minimum variance unbiased estimate of the state $x_k$. 
\begin{align}
    \hat{x}_{0|-1} & = \Bar{x}_0, \quad P_{0|-1} = \Sigma \\
    \hat{x}_{k+1|k} & = A_d\hat{x}_k + B_du_k \nonumber \\
    P_{k+1|k} & = A_dP_kA_d^T + Q_d \nonumber \\
    K_k & = P_{k|k-1} C^T (CP_{k|k-1}C^T + R_d)^{-1} \nonumber \\
    \hat{x}_k & = \hat{x}_{k|k-1} + K_k(y_k - C\hat{x}_{k|k-1}) \nonumber \\
    P_k & = P_{k|k-1} - K_kCP_{k|k-1} \nonumber
\end{align}
When the usual conditions provided by Kalman, the estimator gain converges to its steady-state value, and in most of applications, this convergence occurs in a few steps. Therefore, we can write state error covariance, $P$, and Kalman gain, $K$, as follows:
\begin{align}
    P \overset{\Delta}{=} \lim_{k\rightarrow\infty} P_{k|k-1}, \quad K \overset{\Delta}{=} PC^T(CPC^T + R_d)^{-1} .
\end{align}
We assume the system to be in steady state. By considering $\Sigma = P$ as the initial condition, one can rewrite the Kalman filter as a fixed gain estimator in the following form:
\begin{align}
    \hat{x}_{0|-1} & = \Bar{x}_0, \quad \hat{x}_{k+1|k} = A_d\hat{x}_{k} + B_du_k \nonumber \\
    \hat{x}_k & = \hat{x}_{k|k-1} + K(y_k - C\hat{x}_{k|k-1}) .
\end{align}

\subsection{Linear Quadratic Gaussian (LQG) Control}
In this section, an LQG control scheme is designed such that  minimize the following infinite-horizon objective function
\begin{align}
    J = \lim_{N\rightarrow\infty}\mathbb{E}\frac{1}{N}\bigg[\sum_{k=0}^{N-1}(x_k^TWx_k + u_k^TUu_k) \bigg],
\end{align}
where $W$ and $U$ are positive semi-definite matrices. It is known that the above optimization problem yields the following fixed gain control input
\begin{align}\label{eq:lqg_optimal}
    u_k = u_k^* = -(B_d^TSB_d + U)^{-1}B_d^TSA_d\hat{x}_{k|k},
\end{align}
where $S$ can be obtained by solving the well-known infinite-horizon Riccati equation
\begin{align}
    S = A_d^TSA_d + W - A_d^TSB_d(B_d^TSB_d + U)^{-1}B_d^TSA_d .
\end{align}
By defining $L = -(B_d^TSB_d + U)^{-1}B_d^TSA_d$, the LQG controller can be rewritten as $u_k^* = L\hat{x}_{k|k}$. 

\subsection{\texorpdfstring{$\chi^2$}{TEXT} Failure Detector}
The $\chi^2$ detector has been widely used to detect anomalies in control systems. Principle idea of this detector is based on the probability distribution of the residual of Kalman filter.
\begin{theorem} \label{thm:kalman_residual} \cite{mehra1971innovations}
For the discrete-time system defined by \eqref{eq:linear_disc_sys} with Kalman filter and LQG controller, the residuals $y_i-C\hat{x}_{i|i-1}$ of Kalman filter are i.i.d. Gaussian distributed with zero mean and covariance $\mathscr{P}$, where $\mathscr{P} = CPC^T + R_d$.
\end{theorem}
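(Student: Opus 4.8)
The plan is to express the residual in terms of the one-step prediction error and the measurement noise, then dispatch the three claims — zero mean, the stated covariance, and the i.i.d.\ property — in that order, reserving whiteness (independence across time) for last as the crux. Throughout I work in steady state, so $P_{k|k-1}=P$ for all $k$.

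First I would introduce the prediction error $e_k \overset{\Delta}{=} x_k - \hat{x}_{k|k-1}$ and rewrite the residual via \eqref{eq:output_disc_sys} as $z_k = y_i - C\hat{x}_{i|i-1} = C e_k + v_k$. Because $x_0$, the process noise, and the measurement noise are jointly Gaussian and $z_k$ is an affine function of them, the residuals are jointly Gaussian; hence specifying the mean and covariance pins down the distribution entirely. Unbiasedness of the Kalman filter gives $\mathbb{E}[e_k]=0$, and $v_k$ is zero-mean, so $\mathbb{E}[z_k]=0$ follows immediately.

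For the single-time covariance I would expand $\mathbb{E}[z_k z_k^T] = C\,\mathbb{E}[e_k e_k^T]\,C^T + C\,\mathbb{E}[e_k v_k^T] + \mathbb{E}[v_k e_k^T]\,C^T + \mathbb{E}[v_k v_k^T]$. The key observation is that $e_k$ is built only from $x_0$ and noises up through time $k-1$, so it is independent of $v_k$ and the two cross terms vanish. Substituting the steady-state identity $\mathbb{E}[e_k e_k^T]=P$ together with $\mathbb{E}[v_k v_k^T]=R_d$ yields $\mathbb{E}[z_k z_k^T]=CPC^T+R_d=\mathscr{P}$.

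The main obstacle is the whiteness, i.e.\ showing $\mathbb{E}[z_k z_j^T]=0$ for $j<k$, which combined with joint Gaussianity upgrades uncorrelatedness to full independence. Here I would invoke the orthogonality principle of the Kalman filter: since $\hat{x}_{k|k-1}$ is the minimum-variance estimate of $x_k$ given $y_0,\dots,y_{k-1}$, the error $e_k$ is orthogonal to the linear span of those past measurements. Because $z_j$ for $j\le k-1$ lies in that span, $\mathbb{E}[e_k z_j^T]=0$; and $v_k$ is independent of $y_0,\dots,y_j$, so $\mathbb{E}[v_k z_j^T]=0$. Therefore $\mathbb{E}[z_k z_j^T]=C\,\mathbb{E}[e_k z_j^T]+\mathbb{E}[v_k z_j^T]=0$. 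Zero cross-covariance plus joint Gaussianity gives mutual independence, while stationarity in steady state makes the residuals identically distributed, completing the argument. The subtle point to handle carefully is justifying that $z_j$ belongs to the span of past observations and that the orthogonality of $e_k$ is with respect to that same span — the rest is routine second-moment bookkeeping.
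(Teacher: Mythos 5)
Your proof is correct, but note that the paper itself offers no proof of this statement at all: it is imported by citation from Mehra's 1971 innovations paper, so there is no in-paper argument to compare against. Your argument is the standard one: decompose the residual as $z_k = Ce_k + v_k$, get joint Gaussianity from affineness in the Gaussian primitives, compute the single-time covariance using $\mathbb{E}[e_k e_k^T]=P$ and the vanishing cross terms, and obtain whiteness from the orthogonality principle, which joint Gaussianity then upgrades to independence. One point worth making explicit, since the theorem statement includes the LQG controller in the loop: the claim that $e_k$ is built only from $x_0$ and the noises up to time $k-1$ (and that everything stays affine in the primitives) holds because the known input $B_d u_k$ cancels in the error recursion, $e_{k+1} = A_d(I-KC)e_k + w_k - A_dKv_k$, so feedback of the form $u_k = L\hat{x}_{k|k}$ does not enter the error dynamics; this is also what justifies applying the orthogonality principle with measurement-dependent control. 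Alternatively, whiteness can be verified without invoking optimality at all, by a direct computation with this recursion: $\mathbb{E}[e_{j+1}z_j^T] = A_d\bigl[(I-KC)PC^T - KR_d\bigr] = A_d\bigl[PC^T - PC^T(CPC^T+R_d)^{-1}(CPC^T+R_d)\bigr] = 0$, which is the same cancellation your projection argument encodes abstractly.
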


By using Theorem \ref{thm:kalman_residual}, when the system is in normal condition, the probability to get the sequence $y_{k-\mathscr{T}+1} , \cdots, y_k$ can be obtained as follows
\begin{align}\label{eq:prob_detection}
    P(y_{k-\mathscr{T}+1} , \cdots, y_k) = \Bigg[\frac{1}{(2\pi)^{N/2}|\mathscr{P}|} \Bigg]^{\mathscr{T}}\exp\big(-\frac{1}{2}g_k\big),
\end{align}
where $\mathscr{T}$ is the window size of the detection, and $g_k$ is defined as
\begin{align}\label{eq:g_k_definition}
    g_k = \sum_{i=k-\mathscr{T}+1}^{k} (y_i - C\hat{x}_{i|i-1})^T\mathscr{P}^{-1}(y_i - C\hat{x}_{i|i-1}) .
\end{align}
When the probability of detection is low, one conclude that there is an anomaly in the system, but in order to apply $\chi^2$ detector, we do not need to calculate this probability. when the system is in normal condition, $g_k$ has a $\chi^2$ distribution with $m\mathscr{T}$ degrees of freedom. Then, One can use Equation \eqref{eq:g_k_definition} to detect any failure in the system. Therefore, the $\chi^2$ detector can be rewritten as 
\begin{align}\label{eq:detector}
    g_k \lessgtr threshold ,
\end{align}
where threshold is chosen for a specific false alarm probability. If $g_k$ is greater than the threshold, then the detector will trigger an alarm.

\subsection{Detection of Replay Attack in Control Systems}
In the replay attack, the attacker resends a set of data which is recorded for a period of time. Therefore, the attacked system can be modeled by a virtual system in the following form
\begin{align}
    x'_{k+1} & = A_d x'_k + B_d u'_k, y'_k = Cx'_k + v'_k \nonumber \\
    \hat{x}'_{k+1|k} & = A_d \hat{x}'_{k|k} + B_d u'_k \nonumber \\ \hat{x}'_{k+1|k+1} & = \hat{x}'_{k+1|k} + K(y'_k - \hat{x}'_{k+1|k}) \nonumber \\
    u'_k & = L\hat{x}'_{k|K} .
\end{align}
The virtual system is a shifted version of the main system. State estimation for the main and virtual systems can be represented by the following equations
\begin{align}
    \hat{x}_{k+1|k} & = (A_d+B_dL)(I-KC)\hat{x}_{k|k-1} + (A_d+B_dL)Ky'_k \nonumber\\
    \hat{x}'_{k+1|k} & = (A_d+B_dL)(I-KC)\hat{x}'_{k|k-1} + (A_d+B_dL)Ky'_k .
\end{align}
Residuals of Kalman filter in the main and virtual systems have the same distribution. Let us define $\mathscr{A}\overset{\Delta}{=}(A_d+B_dL)(I-KC)$ and $\zeta\overset{\Delta}{=}\hat{x}_{0|-1}-\hat{x}'_{0|-1}$, then $g_k$ for an attacked system can be obtained as
\begin{align}\label{eq:g_k_under_attack}
    g_k = & \sum_{i=k-\mathscr{T}+1}^{k} \Big[(y'_i - C\hat{x}'_{i|i-1})^T\mathscr{P}^{-1}(y'_i - C\hat{x}'_{i|i-1}) \nonumber\\
    & + 2(y'_i - C\hat{x}'_{i|i-1})^T\mathscr{P}^{-1}C\mathscr{A}^i\zeta \nonumber\\
    & + \zeta^T(\mathscr{A}^i)^TC^T\mathscr{P}^{-1}C\mathscr{A}^i\zeta\Big] .
\end{align}
To evaluate the performance of designed $\chi^2$ detector, we need to consider two cases. 
\begin{enumerate}
    \item $\mathscr{A}$ is stable: In this case, the second and third terms in \eqref{eq:g_k_under_attack} will converge to zero. Hence, $g_k$ for the main and virtual system  has the same distribution. Then, the detector is completely useless to detect any replay attack in the control system. 
    \item $\mathscr{A}$ is unstable: Any replay attack which is applied for a long time can be detected by the $\chi^2$ detector, because $g_k$ will soon become unbounded, and by comparing $g_k$ for the main and virtual system, the replay attack will be detected. 
\end{enumerate}
We conclude that the $\chi^2$ detector is useful only for an unstable $\mathscr{A}$. To be able to detect replay attack when $\mathscr{A}$ is stable, the control input is redesigned by adding an authentication signal. Let us define the new controller in the following form:
\begin{align}
    u_k = u_k^{*} + \Delta u_k ,
\end{align}
where $u_k^{*}$ is the LQG optimal control which is defined by \eqref{eq:lqg_optimal}, and $\Delta u_k$ is an authentication signal added to the optimal control to be able to detect replay attack. The sequence $\Delta u_k$ is drawn from an i.i.d. Gaussian distribution with zero mean and covariance $\mathscr{Q}$, and independent of $u_k^{*}$. By adding this authentication signal, the control input will not be the optimal one. However, it will help us to detect replay attack. In other words, the watermarking framework sacrifice the control performance to be able to detect the attack. 

\begin{theorem}\cite{mo2013detecting}\label{thm:LQG_performance}
After adding the authentication signal to the optimal control, the LQG performance is given by
\begin{align}\label{eq:new_lgq_cost}
    J' = J + trace[(U + B_d^TSB_d)\mathscr{Q}].
\end{align}
\end{theorem}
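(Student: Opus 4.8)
The plan is to reduce the computation of $J'$ to a single innovation-quadratic term by means of the completion-of-squares identity underlying the LQG solution, and then to isolate the contribution of the watermark $\Delta u_k$ using its independence from the estimation error. First I would establish the purely algebraic identity that, for every $x$ and $u$,
\begin{align}
x^T W x + u^T U u + (A_d x + B_d u)^T S (A_d x + B_d u) = x^T S x + (u - Lx)^T (U + B_d^T S B_d)(u - Lx). \nonumber
\end{align}
Expanding the right-hand side, the $u$-quadratic and the $x$--$u$ cross terms match term by term once $(U + B_d^T S B_d) L = -B_d^T S A_d$ is used (which is just the definition of $L$), while equality of the $x$-quadratic coefficients $W + A_d^T S A_d = S + A_d^T S B_d (B_d^T S B_d + U)^{-1} B_d^T S A_d$ is precisely the Riccati equation for $S$.

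Next I would apply this identity along the closed-loop trajectory with $u_k = u_k^{*} + \Delta u_k = L\hat{x}_{k|k} + \Delta u_k$, take expectations, and average. Writing $x_{k+1} = A_d x_k + B_d u_k + w_k$, where $w_k$ is zero-mean with covariance $Q_d$ and independent of $(x_k, u_k)$, gives $\mathbb{E}[(A_d x_k + B_d u_k)^T S(A_d x_k + B_d u_k)] = \mathbb{E}[x_{k+1}^T S x_{k+1}] - \text{trace}(S Q_d)$. The per-stage costs therefore telescope: the differences $\mathbb{E}[x_k^T S x_k] - \mathbb{E}[x_{k+1}^T S x_{k+1}]$ average to zero in steady state, leaving
\begin{align}
J' = \text{trace}(S Q_d) + \lim_{N\to\infty}\frac{1}{N}\sum_{k=0}^{N-1} \mathbb{E}\big[(u_k - L x_k)^T (U + B_d^T S B_d)(u_k - L x_k)\big]. \nonumber
\end{align}

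The final step is to evaluate the residual quadratic term. Setting $e_k = x_k - \hat{x}_{k|k}$ gives $u_k - L x_k = L\hat{x}_{k|k} + \Delta u_k - L x_k = -L e_k + \Delta u_k$. Since the Kalman error recursion is driven only by $w_k$ and $v_k$ (the $B_d u_k$ terms cancel in $e_{k+1}$), $e_k$ keeps the same steady-state distribution as in the unwatermarked system and is independent of the zero-mean $\Delta u_k$, so the cross term vanishes and, with $M = U + B_d^T S B_d$,
\begin{align}
\mathbb{E}\big[(u_k - L x_k)^T M (u_k - L x_k)\big] = \mathbb{E}[e_k^T L^T M L e_k] + \text{trace}(M \mathscr{Q}). \nonumber
\end{align}
Taking $\Delta u_k = 0$ in the same derivation recovers the unwatermarked cost $J = \text{trace}(S Q_d) + \lim_{N\to\infty}\frac{1}{N}\sum_{k=0}^{N-1}\mathbb{E}[e_k^T L^T M L e_k]$, and subtracting yields $J' = J + \text{trace}[(U + B_d^T S B_d)\mathscr{Q}]$, as claimed.

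I expect the main obstacle to be the careful justification of the two probabilistic facts rather than the algebra: (i) that $e_k$ is independent of $\Delta u_k$ and retains its unwatermarked steady-state covariance, which rests on the control-independence of the Kalman error recursion; and (ii) that the telescoping $x^T S x$ terms indeed average away, which requires the watermarked closed loop to stay mean-square stable so that $\mathbb{E}[x_k^T S x_k]$ is bounded uniformly in $k$.
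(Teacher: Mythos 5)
Your proof is correct. Note that the paper itself does not prove this theorem: it is imported verbatim, with citation, from Mo et al.\ (reference \cite{mo2013detecting}), so there is no in-paper argument to compare against. Your derivation --- the completion-of-squares identity equivalent to the Riccati equation, the telescoping of the $\mathbb{E}[x_k^T S x_k]$ terms under mean-square stability, and the decomposition $u_k - Lx_k = -Le_k + \Delta u_k$ with the cross term killed by the independence and zero mean of the watermark --- is precisely the standard argument behind the cited result, and you correctly identify the two load-bearing probabilistic facts: that the Kalman error recursion is control-independent (so $e_k$ retains its unwatermarked law and is independent of $\Delta u_k$, which requires, as in the paper's setup, that the estimator's prediction step uses the full applied input $u_k = u_k^* + \Delta u_k$), and that the watermarked closed loop remains mean-square stable so the telescoped boundary terms vanish.
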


The following theorem represents performance of the $\chi^2$ detector in presence of replay attack.
\begin{corollary}\cite{mo2013detecting}
In the absence of an attack, the expectation of $g_k$ in the $\chi^2$ detector is 
\begin{align}\label{eq:g_k_without_attack}
    \mathbb{E}[g_k] = m\mathscr{T} .
\end{align}
Under attack, the asymptotic expectation becomes
\begin{align}\label{eq:g_k_with_attack}
    \lim_{k\rightarrow\infty}\mathbb{E}[g_k] = m\mathscr{T} + 2trace(C^T\mathscr{P}^{-1}C\mathscr{U})\mathscr{T} ,
\end{align}
\end{corollary}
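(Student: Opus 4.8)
The plan is to handle the two regimes separately, in each case reducing $g_k$ to a sum of quadratic forms in zero-mean Gaussian vectors and then applying the elementary identity $\mathbb{E}[z^T M z] = \operatorname{trace}(M\,\mathbb{E}[zz^T])$ valid for any zero-mean $z$ and fixed $M$.

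For the no-attack case, Theorem \ref{thm:kalman_residual} tells us that the residuals $z_i = y_i - C\hat{x}_{i|i-1}$ are i.i.d.\ zero-mean Gaussian with covariance $\mathscr{P}$. From the definition \eqref{eq:g_k_definition} we have $g_k = \sum_{i=k-\mathscr{T}+1}^{k} z_i^T \mathscr{P}^{-1} z_i$, so taking expectations term by term gives $\mathbb{E}[g_k] = \sum_i \operatorname{trace}(\mathscr{P}^{-1}\mathbb{E}[z_i z_i^T]) = \sum_i \operatorname{trace}(\mathscr{P}^{-1}\mathscr{P}) = \mathscr{T}\operatorname{trace}(I_m) = m\mathscr{T}$, which is \eqref{eq:g_k_without_attack} (and is just the mean of a $\chi^2_{m\mathscr{T}}$ variable, consistent with the stated distribution).

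For the attacked case I would start from the decomposition \eqref{eq:g_k_under_attack}, but with the estimator-difference dynamics updated to reflect the watermark. Because the true plant is driven by the current watermark $\Delta u_i$ while the replayed stream carries an independently recorded watermark $\Delta u_i'$, the difference $\Delta\hat{x}_i = \hat{x}_{i|i-1} - \hat{x}'_{i|i-1}$ no longer satisfies $\Delta\hat{x}_i = \mathscr{A}^i\zeta$ alone but obeys $\Delta\hat{x}_{i+1} = \mathscr{A}\,\Delta\hat{x}_i + B_d(\Delta u_i - \Delta u_i')$. Taking the expectation of \eqref{eq:g_k_under_attack} term by term, the first term again yields $m\mathscr{T}$; the cross term vanishes because the virtual residual $z_i'$ is zero-mean and, as a Kalman innovation, uncorrelated both with $\zeta$ and with the past (current and recorded) watermarks that drive $\Delta\hat{x}_i$; and the third term is rewritten via the trace identity as $\operatorname{trace}\!\big(C^T\mathscr{P}^{-1}C\,\mathbb{E}[\Delta\hat{x}_i\Delta\hat{x}_i^T]\big)$. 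The crux is the limit $k\to\infty$: since $\Delta u_i$ and $\Delta u_i'$ are independent i.i.d.\ sequences each of covariance $\mathscr{Q}$, their difference has covariance $2\mathscr{Q}$, so $\mathbb{E}[\Delta\hat{x}_i\Delta\hat{x}_i^T]$ converges to $2\mathscr{U}$, where $\mathscr{U}$ solves the Lyapunov equation $\mathscr{U} = \mathscr{A}\mathscr{U}\mathscr{A}^T + B_d\mathscr{Q}B_d^T$ (the transient $\mathscr{A}^i\zeta$ contribution washes out, whereas without the watermark it simply decayed to zero and gave no detection). Substituting and summing the $\mathscr{T}$ identical limiting terms gives $2\operatorname{trace}(C^T\mathscr{P}^{-1}C\,\mathscr{U})\mathscr{T}$, which added to $m\mathscr{T}$ produces \eqref{eq:g_k_with_attack}.

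I expect the main obstacle to be the bookkeeping in the attacked case rather than any single hard inequality: one must argue rigorously that the cross term has exactly zero expectation (which rests on the innovation $z_i'$ being uncorrelated with the entire past driving $\Delta\hat{x}_i$), correctly track the factor of two that distinguishes the steady-state covariance $2\mathscr{U}$ from $\mathscr{U}$, and justify that the Lyapunov recursion has effectively reached steady state as $k\to\infty$ so that each of the $\mathscr{T}$ summands contributes the same limiting trace. Verifying the convergence of the covariance recursion also implicitly uses stability of $\mathscr{A}$, which is precisely the regime in which the watermark is needed.
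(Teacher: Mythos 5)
Your proof is correct. Note that the paper itself does not prove this corollary at all --- it is imported directly from \cite{mo2013detecting} --- and your argument is essentially the standard one from that reference: the trace identity gives $m\mathscr{T}$ in the no-attack case, and under attack you correctly replace the unforced difference dynamics $\mathscr{A}^i\zeta$ underlying \eqref{eq:g_k_under_attack} by the watermark-driven recursion $\Delta\hat{x}_{i+1}=\mathscr{A}\,\Delta\hat{x}_i+B_d(\Delta u_i-\Delta u_i')$, eliminate the cross term via the innovation property, and obtain the limiting covariance $2\mathscr{U}$ from $\operatorname{Cov}(\Delta u_i-\Delta u_i')=2\mathscr{Q}$, which is precisely the origin of the coefficient $2$ in \eqref{eq:g_k_with_attack} and the bookkeeping most easily gotten wrong.
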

where $\mathscr{U}$ is the solution to the following equation
\begin{align}\label{eq:relation_cU_cQ}
    \mathscr{U} = \sum_{i=0}^{\infty}\mathscr{A}^iB_d\mathscr{Q}B_d^T(\mathscr{A}^i)^T .
\end{align}

The difference in the expectations of $g_k$ with and without attack proves that the detection rate does not converge to the false alarm rate.

\section{Watermarking in Continuous-time Systems}\label{sec:watermarking_cont_sys}

In this section, the effect of the sampling period on performance of the watermarking framework is studied. In digital signal processing, the sampling theorem states that to reconstruct an unknown band-limited signal from discretized version of that signal, the sampling rate must be at least twice as high  as the highest frequency in the signal. In digital control, this theorem is applied to a feedback controller. Thus, based on this theorem the sampling rate must be at least twice the required closed-loop bandwidth of the system. 
In most of applications, to get an appropriate time response for the control system, this sampling rate would be inadequate. Moreover, one of the most important concepts which should be considered is the delay between a command input and the system response to the command input. This delay should be reduced as much as possible. In order to confront these issues, Franklin G.F. \textit{et al.} \cite{franklin1998digital} proposed that the sampling rate should be at least $20$ times the required closed-loop bandwidth of the system. This is a lower bound for sampling rate. Therefore, this shows that decreasing sampling period will increase the control performance. However, to evaluate the performance of watermarking framework, we need to analyze performance of controller and detector simultaneously. In the next subsection, we will analyze the effect of sampling period on the $\chi^2$ detector performance.

\subsection{Effect of Sampling Period on the \texorpdfstring{$\chi^2$}{TEXT} Detector}

Sampling period affects not only the control system response but also performance of the detector. As it is mentioned, if the sampling period decreases, time response of the control system will improve. The effect of small sampling period on performance of the $\chi^2$ detector is studied in the following lemma and corollary.

\begin{lemma}\label{lemma:delta_g_k}
Consider system \eqref{eq:linear_cont_sys} with the LQG controller \eqref{eq:lqg_optimal} designed on the sampled data system \eqref{eq:linear_disc_sys}, assuming that $\mathscr{U}$ be bounded for any $T>0$, then as $T$ goes to zero, the detector \eqref{eq:detector} is not able to detect any replay attack.
\end{lemma}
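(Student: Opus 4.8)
The plan is to track the single quantity that separates an attacked trajectory from a nominal one and to show that it collapses to zero as $T \to 0$. Comparing \eqref{eq:g_k_without_attack} and \eqref{eq:g_k_with_attack}, the nominal and attacked asymptotic means of $g_k$ differ only through the additive term
\begin{align}\label{eq:det_margin}
    \Delta(T) \overset{\Delta}{=} 2\,\mathrm{trace}\big(C^T\mathscr{P}^{-1}C\mathscr{U}\big)\mathscr{T}.
\end{align}
Since the $\chi^2$ test \eqref{eq:detector} thresholds $g_k$ at a level fixed by the desired false-alarm probability, the attack is asymptotically undetectable exactly when the attacked statistic collapses onto the nominal one, i.e. when $\Delta(T)\to 0$. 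Hence the entire claim reduces to proving $\Delta(T)\to 0$ as $T\to 0$.

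In \eqref{eq:det_margin} the window $\mathscr{T}$ is fixed and $\mathscr{U}$ is bounded by hypothesis, so everything hinges on the behaviour of $\mathscr{P}^{-1}$. Recalling $\mathscr{P}=CPC^T+R_d$ with $R_d\approx R/T$, I would first argue that the steady-state prediction covariance $P$ stays bounded as $T\to 0$: inserting the expansions $A_d=e^{AT}=I+AT+O(T^2)$, $Q_d=QT+O(T^2)$ and $R_d=R/T$ into the discrete algebraic Riccati equation and collecting the leading $O(T)$ terms recovers the continuous-time Kalman--Bucy equation $AP+PA^T+Q-PC^TR^{-1}CP=0$, whose stabilizing solution is finite. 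Consequently $CPC^T$ remains bounded while $R_d=R/T\to\infty$, so $\mathscr{P}$ is dominated by $R_d$ and
\begin{align}
    \mathscr{P}^{-1}=\big(CPC^T+R/T\big)^{-1}\approx T R^{-1}\longrightarrow 0 .
\end{align}

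Combining the two facts, with $\mathscr{U}$ bounded and $\mathscr{P}^{-1}=O(T)$, the detection margin satisfies $\Delta(T)=2\,\mathrm{trace}(C^T\mathscr{P}^{-1}C\mathscr{U})\mathscr{T}=O(T)\to 0$. Thus the attacked asymptotic mean of $g_k$ tends to $m\mathscr{T}$, which is exactly the nominal value in \eqref{eq:g_k_without_attack}; the residual statistics under attack and under normal operation become indistinguishable and the detection rate of \eqref{eq:detector} converges to the false-alarm rate, so no replay attack can be detected in the limit. I expect the main obstacle to be the middle step: rigorously establishing that $P$ is bounded and that $R_d=R/T$ dominates $CPC^T$, so that $\mathscr{P}^{-1}=O(T)$ genuinely holds rather than being a mere formal expansion. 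The standing assumption that $\mathscr{U}$ is bounded is what lets me avoid separately controlling the accumulation of watermark energy through $B_d$ and powers of $\mathscr{A}$, which could otherwise interfere with the limit.
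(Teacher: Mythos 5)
Your proposal is correct and follows essentially the same route as the paper's proof: approximate $R_d \approx R/T$, argue that the steady-state covariance $P$ remains bounded so that $\mathscr{P} \approx R/T$ and $\mathscr{P}^{-1} \approx TR^{-1}$, and substitute into \eqref{eq:g_k_with_attack} to get $\mathbb{E}[\Delta g_k] = 2\,\mathrm{trace}(C^TR^{-1}C\mathscr{U})\mathscr{T}T \rightarrow 0$, so the attacked and nominal statistics coincide in the limit. The one place you diverge is in justifying the boundedness of $P$, where your argument (passing the discrete algebraic Riccati equation to its continuous-time Kalman--Bucy limit) is in fact more rigorous than the paper's informal remark that the Kalman recursion contains no $1/T$ terms.
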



\begin{proof}
By using a very small sampling period, i.e. $T\rightarrow 0$, the covariance of the measurement noise in the discretized system can be approximated by $R_d\approx R/T$. By substituting this approximation in the covariance of the residual of Kalman filter, its covariance can be rewritten as
\begin{align}\label{eq:cov_kalman_substit}
    \mathscr{P} = CPC^T + \frac{R}{T} .
\end{align}
All the terms in the Taylor expansions of $A_d$, $B_d$, and $Q_d$ are constant or $\mathcal{O}(T)$\footnote{$g(T) = \mathcal{O}(T)$ as $T\rightarrow 0$: ''asymptotically $g$ goes to zero at least as fast as $T$``, or more formally:\newline
$\quad\exists K\geq 0\quad \mbox{s.t.} \quad \Big|\frac{g(T)}{T}\big|\leq K \quad \mbox{as} \quad T\rightarrow 0$.}. Because of the presence of $(CP_{k|k-1}C^T + R/T)^{-1}$ in $K_k$ and by considering the fact that we use very small sampling period, this term can be considered an $\mathcal{O}(T)$. Given that the Kalman filter does not have any term $1/T$, the matrix $P$ does not contain $1/T$. Since we assumed that the sampling period is very small, the second term in \eqref{eq:cov_kalman_substit} will be the dominant term. Thus, the covariance of the residuals of the Kalman filter and its inverse can be approximated as below
\begin{align}\label{eq:cov_kalman_approx}
    \mathscr{P} \approx \frac{R}{T} , \quad
    \mathscr{P}^{-1} \approx TR^{-1} .
\end{align}
By substituting $\mathscr{P}^{-1}$ from \eqref{eq:cov_kalman_approx} into \eqref{eq:g_k_with_attack}, the difference in the expectation of $g_k$ with and without attack can be written as
\begin{align}\label{eq:delta_g_k_t}
    \mathbb{E}[\Delta g_k] = 2trace(C^TR^{-1}C\mathscr{U})\mathscr{T}T.
\end{align}
Therefore, as the sampling period tends to zero, the difference in the expectation of $g_k$ with and without attack tends to zero, and according to \eqref{eq:g_k_with_attack} of Corollary 1, the detector is not able to detect any replay attack. 
\end{proof}
\begin{remark}
As the sampling period goes to zero, the replay attack detection rate decreases. On the other hand, by increasing the sampling period, $\mathbb{E}[\Delta g_k]$ decreases due to the degradation of the control performance which results in larger steady state error and consequently in a lower detection rate. Then, there is an optimal value of the sampling period that maximizes detection performance. 
\end{remark}

In the next subsection, we proposed an optimization to manage the trade-off between detection rate and control performance. 

\subsection{Optimization}
We have shown in the previous subsection that the sampling period is a compromise between the performance of the controller and detector. Therefore, to find the optimal value of the sampling period and covariance of watermarking signal, we have to solve the following optimization problem.
\begin{align}\label{eq:optimization_problem_1}
    \max_{\mathscr{Q},T} & \quad 2trace(C^T\mathscr{P}^{-1}C\mathscr{U})\mathscr{T} \\
    \text{subject to} & \quad trace[(U + B_d^TSB_d)\mathscr{Q}] < \mu \nonumber \\
    & \quad 0 < T \leq \Bar{T} \nonumber \\
    & \quad \mathscr{U} - B_d\mathscr{Q}B_d^T = \mathscr{A}\mathscr{U}\mathscr{A}^T, \nonumber
\end{align}
where $\mu$ is the extra cost due to the watermarking signal, and $\Bar{T}$ is upper bound of the sampling period which is used to prevent aliasing and to give a proper system response. Since all the parameters of the system, controller, and detector are functions of the sampling period the optimization problem \eqref{eq:optimization_problem_1} may be hard to solve. To simplify this procedure we fix the sampling period $T$ in order to compute the following optimization problem
\begin{align}\label{eq:optimization_problem_2}
    \max_{\mathscr{Q}} & \quad 2trace(C^T\mathscr{P}^{-1}C\mathscr{U})\mathscr{T} \\
    \text{subject to} & \quad trace[(U + B_d^TSB_d)\mathscr{Q}] < \mu \nonumber \\
    & \quad \mathscr{U} - B_d\mathscr{Q}B_d^T = \mathscr{A}\mathscr{U}\mathscr{A}^T \nonumber.
\end{align}
Then, we iterate the computation of \eqref{eq:optimization_problem_2} for different values of $T$. Bisection methods can be used to find the optimal sampling period $T$. 
To have a fair comparison, a constant value $\mu$ is used as the extra cost for all the sampling period values.



\section{Simulation Results}\label{sec:simulation}

In this section, the theoretical results is evaluated through intensive simulation studies carried out to detect a replay attack in a quadrotor. First, we will represent the mathematical model of a quadrotor. Then, the watermarking framework will be applied to this system. 

Dynamical behavior of a quadrotor can be modeled by nonlinear differential equations \cite{beard2008quadrotor}. Since the watermarking framework is developed for a system in steady-state, we assume that the quadrotor is hovering around an equilibrium point. Then, we can linearize its nonlinear model around this point. After linearization, the system can be represented in state space form of \eqref{eq:linear_cont_sys}. Matrices $A$ and $B$ are presented in \cite{beard2008quadrotor}. The state variables vector is
\begin{align}
    x = [ \Dot{p}_x \quad p_x \quad \Dot{p}_y \quad p_y \quad \Dot{p}_z \quad p_z \quad \Dot{\phi} \quad \phi \quad \Dot{\theta} \quad \theta \quad \Dot{\psi} \quad \psi ]^T,
\end{align}
where $p_x$, $p_y$, and $p_z$ are used to determine position of the quadrotor in three principal directions; and $\phi$, $\theta$, and $\psi$ are the roll, pitch and yaw angles, respectively; The output of the system is defined as $y = [ p_x \quad p_y \quad p_z \quad \psi ]^T$; and Control input is defined as follows
\begin{align}
    u = \begin{bmatrix}
    F & \tau_{\phi} & \tau_{\theta} & \tau_{\psi}
    \end{bmatrix}^T,
\end{align}
where $F$ is the force that acts on the quadrotor, and $\tau_{\phi}$, $\tau_{\theta}$, and $\tau_{\psi}$ are rolling, pitching, and yawing torques. Physical parameters of the quadrotor including mass and moments of inertia are set to $m = 0.6 kg$, $J_x = J_y = 0.0092 kgm^2$, and $J_z = 0.0101kgm^2$. The process and measurement noises are considered as independent Gaussian random variables with zero mean. 

The optimal value of the sampling period for our system is $T = 0.1 s$. Figure \ref{fig:delta_gk} shows the expectation of $\Delta g_k$ for different sampling periods. The optimization problem \eqref{eq:optimization_problem_2} is solved by using CVX \cite{grant2014cvx}. It can be observed that when the sampling period is very small, $\mathbb{E}[\Delta g_k]$ becomes very small, then by increasing the sampling period, $\mathbb{E}[\Delta g_k]$ increases, but after $T=0.1$, it starts decreasing.

To show the effect of sampling period on the detector's performance, the watermarking framework is applied to the quadrotor with different sampling periods. Simulation results are shown in Figure \ref{fig:ROC_Curve}. 
This figure shows several ROC curves for the quadrotor system under a replay attack. Six different sampling periods are used, and it can be observed that by increasing the sampling period, performance of the detector improves. However, after $T=0.1s$ the ROC curve starts to change in the reverse direction and detector performance degrades. This behavior of the detector was expected, because the maximum value of $\mathbb{E}[\Delta g_k]$ occurs in $T=0.1s$.
To show the effect of sampling period on the LQG performance, cost function for these sampling periods are calculated and the lowest sampling period, $T = 0.01 s$ is considered as the reference cost function ($J_1$) and the ratio of the cost function for other sampling periods ($J_T$) to this reference value is shown in Table \ref{tab:obj_func}.

\begin{table}[!h]
    \centering
    \begin{tabular}{ |c|c|c|c|c|c| }
        \hline
        $T$ & 0.02 & 0.04 & 0.07 & 0.10 & 0.15  \\
        \hline
        $J_T/J_1$ & 1.03 & 1.10 & 1.21 & 1.38 & 1.68 \\ 
        \hline
    \end{tabular}
    \caption{lqg cost function for different values of sampling period.}
    \label{tab:obj_func}
\end{table}


\begin{figure}[!t]
\centering
\includegraphics[width=9cm]{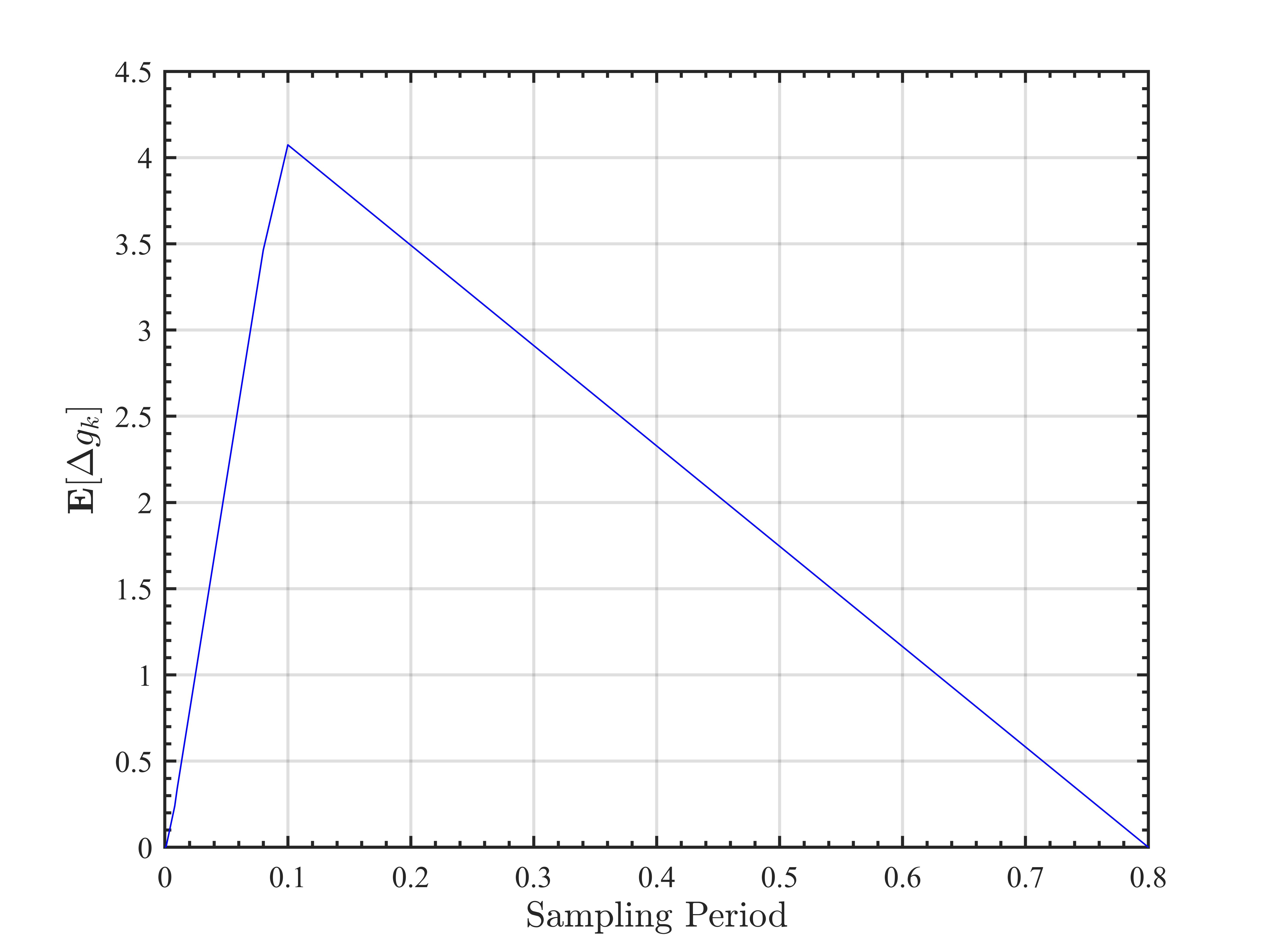} \\
\caption{Expectation of $\Delta g_k$ for different sampling periods.}
\label{fig:delta_gk}
\end{figure}

\begin{figure}[!t]
\centering
\includegraphics[width=9cm]{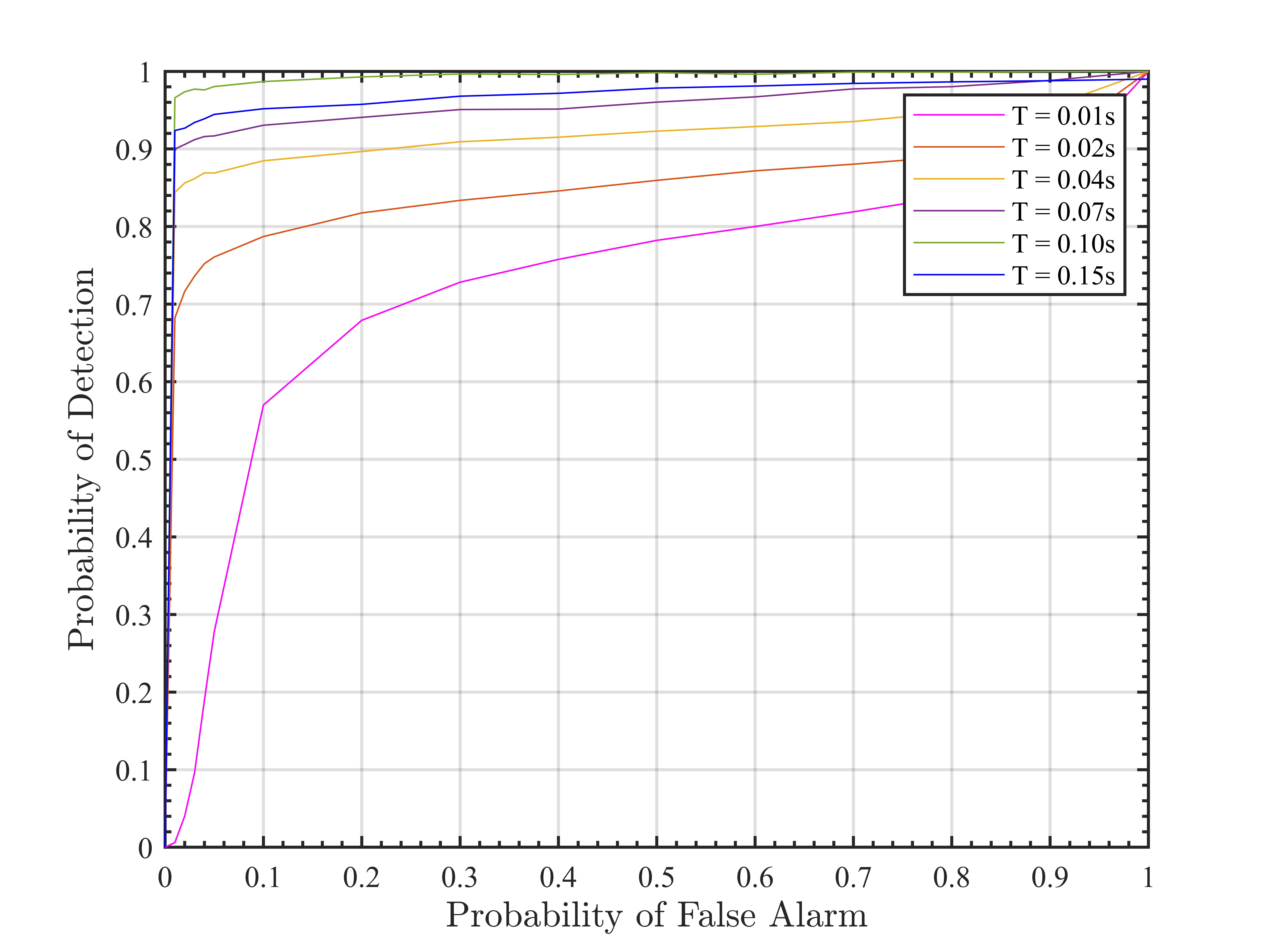} \\
\caption{ROC curve for replay attack detection in the quadrotor system with different sampling periods.}
\label{fig:ROC_Curve}
\end{figure}

\section{Conclusion}
In this paper, we generalize the design of physical watermarking to detect replay attack on  digitally controlled continuous-time systems. In particular, we investigate the effect of sampling period on the performance of the detector. We show that an optimal sampling period exists and we generalize the optimal watermarking signal design to include sampling period as a design variable jointly with the covariance of the watermark. Finally, we apply the watermarking framework to a quadrotor, and numerical simulations are included to illustrate our findings and validate our design.

\bibliographystyle{IEEEtran}
\bibliography{ref}{}

\begin{thebibliography}{10}
\providecommand{\url}[1]{#1}
\csname url@samestyle\endcsname
\providecommand{\newblock}{\relax}
\providecommand{\bibinfo}[2]{#2}
\providecommand{\BIBentrySTDinterwordspacing}{\spaceskip=0pt\relax}
\providecommand{\BIBentryALTinterwordstretchfactor}{4}
\providecommand{\BIBentryALTinterwordspacing}{\spaceskip=\fontdimen2\font plus
\BIBentryALTinterwordstretchfactor\fontdimen3\font minus
  \fontdimen4\font\relax}
\providecommand{\BIBforeignlanguage}[2]{{%
\expandafter\ifx\csname l@#1\endcsname\relax
\typeout{** WARNING: IEEEtran.bst: No hyphenation pattern has been}%
\typeout{** loaded for the language `#1'. Using the pattern for}%
\typeout{** the default language instead.}%
\else
\language=\csname l@#1\endcsname
\fi
#2}}
\providecommand{\BIBdecl}{\relax}
\BIBdecl

\bibitem{lee2008cyber}
E.~A. Lee, ``Cyber physical systems: Design challenges,'' in \emph{2008 11th
  IEEE International Symposium on Object and Component-Oriented Real-Time
  Distributed Computing (ISORC)}.\hskip 1em plus 0.5em minus 0.4em\relax IEEE,
  2008, pp. 363--369.

\bibitem{markoff2010silent}
J.~Markoff, ``A silent attack, but not a subtle one,'' \emph{New York Times},
  vol. 160, no. 55176, p.~6, 2010.

\bibitem{sanger2012obama}
D.~E. Sanger, ``Obama order sped up wave of cyberattacks against iran,''
  \emph{The New York Times}, vol.~1, no.~06, p. 2012, 2012.

\bibitem{khojasteh2020learning}
M.~J. Khojasteh, A.~Khina, M.~Franceschetti, and T.~Javidi, ``Learning-based
  attacks in cyber-physical systems,'' \emph{IEEE Transactions on Control of
  Network Systems}, 2020.

\bibitem{kafash2018constraining}
S.~H. Kafash, N.~Hashemi, C.~Murguia, and J.~Ruths, ``Constraining attackers
  and enabling operators via actuation limits,'' in \emph{2018 IEEE Conference
  on Decision and Control (CDC)}.\hskip 1em plus 0.5em minus 0.4em\relax IEEE,
  2018, pp. 4535--4540.

\bibitem{ghaderi2020blended}
M.~Ghaderi, K.~Gheitasi, and W.~Lucia, ``A blended active detection strategy
  for false data injection attacks in cyber-physical systems,'' \emph{IEEE
  Transactions on Control of Network Systems}, 2020.

\bibitem{renganathan2020distributionally}
V.~Renganathan, N.~Hashemi, J.~Ruths, and T.~H. Summers, ``Distributionally
  robust tuning of anomaly detectors in cyber-physical systems with stealthy
  attacks,'' in \emph{2020 American Control Conference (ACC)}.\hskip 1em plus
  0.5em minus 0.4em\relax IEEE, 2020, pp. 1247--1252.

\bibitem{khojasteh2019authentication}
M.~J. Khojasteh, A.~Khina, M.~Franceschetti, and T.~Javidi, ``Authentication of
  cyber-physical systems under learning-based attacks,''
  \emph{IFAC-PapersOnLine}, vol.~52, no.~20, pp. 369--374, 2019.

\bibitem{langner2011stuxnet}
R.~Langner, ``Stuxnet: Dissecting a cyberwarfare weapon,'' \emph{IEEE Security
  \& Privacy}, vol.~9, no.~3, pp. 49--51, 2011.

\bibitem{case2016analysis}
D.~U. Case, ``Analysis of the cyber attack on the ukrainian power grid,''
  \emph{Electricity Information Sharing and Analysis Center (E-ISAC)}, 2016.

\bibitem{mo2009secure}
Y.~Mo and B.~Sinopoli, ``Secure control against replay attacks,'' in \emph{2009
  47th annual Allerton conference on communication, control, and computing
  (Allerton)}.\hskip 1em plus 0.5em minus 0.4em\relax IEEE, 2009, pp. 911--918.

\bibitem{mo2011cyber}
Y.~Mo, T.~H.-J. Kim, K.~Brancik, D.~Dickinson, H.~Lee, A.~Perrig, and
  B.~Sinopoli, ``Cyber--physical security of a smart grid infrastructure,''
  \emph{Proceedings of the IEEE}, vol. 100, no.~1, pp. 195--209, 2011.

\bibitem{mo2013detecting}
Y.~Mo, R.~Chabukswar, and B.~Sinopoli, ``Detecting integrity attacks on scada
  systems,'' \emph{IEEE Transactions on Control Systems Technology}, vol.~22,
  no.~4, pp. 1396--1407, 2013.

\bibitem{romagnoli2019model}
R.~Romagnoli, S.~Weerakkody, and B.~Sinopoli, ``A model inversion based
  watermark for replay attack detection with output tracking,'' in \emph{2019
  American Control Conference (ACC)}.\hskip 1em plus 0.5em minus 0.4em\relax
  IEEE, 2019, pp. 384--390.

\bibitem{hosseinzadeh2019feasibility}
M.~Hosseinzadeh, B.~Sinopoli, and E.~Garone, ``Feasibility and detection of
  replay attack in networked constrained cyber-physical systems,'' in
  \emph{2019 57th Annual Allerton Conference on Communication, Control, and
  Computing (Allerton)}.\hskip 1em plus 0.5em minus 0.4em\relax IEEE, 2019, pp.
  712--717.

\bibitem{griffioen2019tutorial}
P.~Griffioen, S.~Weerakkody, B.~Sinopoli, O.~Ozel, and Y.~Mo, ``A tutorial on
  detecting security attacks on cyber-physical systems,'' in \emph{2019 18th
  European Control Conference (ECC)}.\hskip 1em plus 0.5em minus 0.4em\relax
  IEEE, 2019, pp. 979--984.

\bibitem{franklin1998digital}
G.~F. Franklin, J.~D. Powell, M.~L. Workman \emph{et~al.}, \emph{Digital
  control of dynamic systems}.\hskip 1em plus 0.5em minus 0.4em\relax
  Addison-wesley Reading, MA, 1998, vol.~3.

\bibitem{astrom1971introduction}
K.~J. Astrom, \emph{Introduction to stochastic control theory}.\hskip 1em plus
  0.5em minus 0.4em\relax Elsevier, 1971.

\bibitem{gelb1974applied}
A.~Gelb, \emph{Applied optimal estimation}.\hskip 1em plus 0.5em minus
  0.4em\relax MIT press, 1974.

\bibitem{mehra1971innovations}
R.~K. Mehra and J.~Peschon, ``An innovations approach to fault detection and
  diagnosis in dynamic systems,'' \emph{Automatica}, vol.~7, no.~5, pp.
  637--640, 1971.

\bibitem{beard2008quadrotor}
R.~W. Beard, ``Quadrotor dynamics and control rev 0.1,'' \emph{Brigham Young
  University}, vol.~19, no.~3, pp. 46--56, 2008.

\bibitem{grant2014cvx}
M.~Grant and S.~Boyd, ``Cvx: Matlab software for disciplined convex
  programming, version 2.1,'' 2014.

\end{thebibliography}

\end{document}